\newtheorem{proposition}{Proposition}
\newtheorem{lemma}{Lemma}
\newtheorem{theorem}{Theorem}
\newtheorem{corollary}{Corollary}
\theoremstyle{definition}
\newtheorem{example}{Example}
\newtheorem{definition}{Definition}
\newcommand{\length}[1]{\vert #1 \vert}
\newcommand{\fact}[2]{\mathcal{F}_{#1}(#2)}
\newcommand{\shift}[2]{\sigma^{\,#2}(#1)}
\newcommand{\modd}{~\mathrm{mod}~}
\newcommand{\alphabet}{\Sigma}
\newcommand{\emptyword}{\varepsilon}
\newcommand{\nonnegint}{\mathbb{N}}
\newcommand{\fromto}[2]{#1, \ldots, #2}
\newcommand{\fromtow}[2]{#1 \ldots #2}
\newcommand{\cword}[1]{#1_{\circ}}
\newcommand{\TR}{\circ}
\renewcommand{\emph}[1]{\textsl{#1}}
\title{Representations of Circular Words}
\author{L\'{a}szl\'{o} Heged\"{u}s\footnotemark[1] \qquad\qquad\qquad Benedek Nagy\footnotemark[1]~ \footnotemark[2]
\email{\{hegedus.laszlo,~nbenedek\}@inf.unideb.hu}
\institute{\footnotemark[1]~ Department of Computer Science,\\Faculty of Informatics, University of Debrecen}
\institute{\footnotemark[2]~ Department of Mathematics, Faculty of Arts and Sciences,\\
Eastern Mediterranean University, Famagusta, North Cyprus, Mersin-10, Turkey}}
\begin{document}

\maketitle

\begin{abstract}
    In this article we give two different ways of representations of
    circular words. Representations with tuples are intended as a
    compact notation, while representations with trees give a way to
    easily process all conjugates of a word. The latter form can also
    be used as a graphical representation of periodic properties
    of finite (in some cases, infinite) words. We also define iterative
    representations which can be seen as an encoding utilizing the flexible
    properties of circular words. Every word over the two letter alphabet
    can be constructed starting from $ab$ by applying the fractional
    power and the cyclic shift operators one after the other, iteratively.
\end{abstract}

\section{Introduction}

 One of the most popular areas of research in theoretical computer science
 is combinatorics on words. This field deals with various properties of
 finite and infinite sequences or words. Being closely related to mathematics,
 it has connections to algebra, number theory, game theory and several
 others. Although it was written decades ago, the books of M. Lothaire
 are good reads and are recommended for researchers who want to get a
 deep overview of the subject \cite{Lothaire83,Lothaire02,Lothaire05}.
 Axel Thue contributed the first results to the field \cite{Thue06,Thue12}.
 Since then many applications in computer science have been discovered
 (e.g., in string matching, data compression, bioinformatics, etc.).

 We deal with circular words (sometimes called necklaces \cite{Smyth03}
 or cyclic words) that are different from linear ones and lead to some
 interesting new viewpoints. Similar sequences can appear in nature, for
 example, the DNA sequences of some bacteria has a similar form to a necklace.
 In the simplest sense, circular words are strongly periodic discrete
 functions.
 
 Circular words are not as widely investigated as linear words. We hope
 that our approach and results may show that interesting facts can be
 obtained by analyzing these sequences.
 Dirk Nowotka wrote about unbordered conjugates of words in Chapter 4
 of his dissertation \cite{Nowotka04}. Complementing this, we deal with
 bordered conjugates that have periods smaller than the length of the
 word.
 Another related article is \cite{FazNagy08}, where permutations and
 cyclic permutations of primitive and non-primitive words were
 investigated.
 For an overview of current research about circular words, the reader can consult
 the following articles. Relations to Weinbaum factorizations are investigated
 in \cite{DHN06}. Several articles were written about pattern avoidance
 of circular words, for example, \cite{Curie02,Fitzpatrick04,Shur10} to
 name a few.
 Other applications in mathematics, namely integer sequences \cite{RittaudVivier11,RittaudVivier12}
 were also considered.
 
 The notion of weak and strong periods was introduced in  \cite{Hegedus13}.
 One result about periodic functions is often cited in combinatorics on
 words, since it is clearly about periodic infinite words too. This result
 belongs to Fine and Wilf \cite{FineWilf65}. It can be shown by example
 that this statement is not true for weak periods of circular words \cite{Hegedus13}.
 In this paper, we investigate two kinds of representations of circular
 words continuing the research line of the paper \cite{Hegedus13}
 presented at the WORDS 2013 conference in Turku. The first one is connected
 to the property that every linear word has a shortest root, while the other
 one is related to tries (see e.g., \cite{Smyth03}).
 
 The structure of the paper is as follows.
 Section \ref{Prelim} defines the notation and notions used in the rest
 of the article. After this, in Section \ref{tuples_reps} we discuss
 ways of representing circular words with tuples and an algorithm to
 construct one of these representations. Section \ref{trees_reps} is about
 representing circular words with trees (or tries) and we present some
 results related to Fibonacci words.
 At the end in Section \ref{concs} some possible directions of future
 research is discussed.

\section{Preliminaries}\label{Prelim}

The following notions and notation are used in the rest of the article.
We will call a non-empty set of symbols an \emph{alphabet} and denote it
by $\alphabet$. \emph{Words} (or \emph{linear words}) over $\alphabet$
are finite sequences of symbols of $\alphabet$. The operation of concatenation
is defined by writing two words after each-other. The \emph{empty word},
i.e., the empty sequence is denoted by $\emptyword$ and it is the unit
element of the monoid $\alphabet^{*}$. We also define
$\alphabet^+ = \alphabet^* \setminus \{\emptyword\}$.
The \emph{length} of the word
$w \in \alphabet^*$ (denoted by $\length{w}$) is the length of $w$ as a
sequence, that is, the number of all the symbols in $w$.
We will use $\nonnegint$ to denote the set of non-negative integers.

We say, that $v\in \alphabet^{*}$ is a \emph{factor} of $w \in \alphabet^{*}$
if there exist words $x,y \in \alphabet^{*}$ such that $w = xvy$.
Furthermore, if $x=\emptyword$ (resp. $y=\emptyword$), then $v$ is a \emph{prefix}
(resp. \emph{suffix}) of $w$. For any word $w$ and integer
$0 \leq k \leq \length{w}$, we denote the \emph{length $k$ factors} of
$w$ by $\fact{k}{w}$.
For arbitrary positive integers $p$ and $q$, we use $(p \modd q)$ to
denote the remainder of $\frac{p}{q}$.
Let $w \in \alphabet^{*}$ be a word of length $n$, that is,
$w = \fromtow{w_{1}}{w_{n}}$, where $\fromto{w_{1}}{w_{n}} \in \alphabet$.
Then for any $p \in \nonnegint$, we have
$w^{\frac{p}{n}} = w^{\lfloor\frac{p}{n}\rfloor}w'$, where
$w' = w_1\ldots w_{(p \modd n)}$. We call $w^{\frac{p}{n}}$ the
\emph{fractional power} of $w$.
From now on we will always refer to
the $i$th position of a word $w \in \alphabet^*$ as $w_i$.
A word $w\in \alphabet^+$ is \emph{primitive} if there is no word
$v\in \alphabet^*$ such that $w = v^p$ where $p\in \nonnegint$, $p > 1$.

A positive integer $p$ is a \emph{period} of $w=w_1\ldots w_n$ if
$w_i = w_{i+p}$ for all $i = 1, \ldots, n-p$. As a complementary notion,
word $v\in \alphabet^*$ is a \emph{border} of $w \in \alphabet^*$ if $v$
is a prefix and also a suffix of $w$. Each word $w \in \alphabet^{*}$
has \emph{trivial borders} $\emptyword$ and $w$.
It is clear, that word $w$ has a
border $b$ if and only if $w$ has period $\length{w} - \length{b}$.

Words $x$ and $y$ are \emph{conjugates} if there exist
words $u,v\in\alphabet^*$ such that $x=uv$ and $y=vu$.
Related to this notion, we define the \emph{shift} operation
$\shift{w}{ }$
for all $w \in \alphabet^{*}$ as follows:
\[
\shift{w}{ } = \fromtow{w_{2}}{w_{n}}w_{1}.
\]
Moreover, $\shift{w}{\ell} = \shift{\shift{w}{ }}{\ell-1} = 
\fromtow{w_{1 + \ell}}{w_{n}}\fromtow{w_{1}}{w_{\ell}}$.
Also, we will use $\shift{w}{-\ell}$ that can also be written as $\shift{w}{\length{w}-\ell}$.

Lyndon and Sch\"{u}tzenberger stated the following, which characterizes
the relation between a word and its non-trivial borders
\cite{LyndonSchutzenberger62}.

\begin{lemma}[Lyndon and Sch\"{u}tzenberger]\label{LyndonSchutzenberger}
Let $x \in \alphabet^+$, $y$, $b \in \alphabet^*$ be arbitrary words.
Then $xb = by$ if and only if there exist
$u \in \alphabet^+$, $v \in \alphabet^*$ and $k \in \nonnegint$ such that
$x = uv$, $y = vu$ and $b = (uv)^ku = u(vu)^k$.
\end{lemma}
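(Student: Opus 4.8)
The plan is to prove the two implications separately, with the forward (``if'') direction reducing to a short computation and the converse (``only if'') direction proceeding by strong induction on $\length{b}$. The algebraic fact that does the real work in both directions is the rebracketing identity $(uv)^m u = u(vu)^m$, valid for every $m \in \nonnegint$: by associativity of concatenation both sides spell the same alternating product $uvuv\cdots vu$. I would record this first, since it is also exactly what makes the two displayed forms of $b$ in the statement equal.

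For the ``if'' direction, I would assume $x = uv$, $y = vu$ and $b = (uv)^k u = u(vu)^k$, and compute
\[
xb = uv\,(uv)^k u = (uv)^{k+1} u, \qquad by = u(vu)^k\, vu = u(vu)^{k+1}.
\]
Applying the identity with $m = k+1$ shows these two words coincide, so $xb = by$.

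For the ``only if'' direction, assume $xb = by$. Comparing lengths gives $\length{x} = \length{y}$, and since both $x$ and $b$ are prefixes of the common word $xb = by$, their relative lengths govern a case split that I would use to drive the induction on $\length{b}$. In the base case $\length{b} < \length{x}$, the word $b$ is a prefix of $x$, so $x = bx'$ with $x' \neq \emptyword$; cancelling the leading $b$ in $bx'b = by$ yields $y = x'b$, and the choice $u = b$, $v = x'$, $k = 0$ gives $x = uv$, $y = vu$, $b = u = (uv)^0 u$. In the inductive step $\length{b} \geq \length{x}$, the word $x$ is a prefix of $b$, so $b = xb'$ with $\length{b'} = \length{b} - \length{x} < \length{b}$; substituting and cancelling the leading $x$ in $x\,xb' = xb'\,y$ gives $xb' = b'y$. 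The induction hypothesis then supplies $u \in \alphabet^+$, $v \in \alphabet^*$, $k' \in \nonnegint$ with $x = uv$, $y = vu$, $b' = (uv)^{k'}u$, whence $b = xb' = (uv)^{k'+1}u$ and the decomposition holds with $k = k'+1$; the second form $b = u(vu)^k$ follows from the rebracketing identity.

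The only genuinely delicate points are the bookkeeping of which word is a prefix of which (which is what cleanly separates the base case from the inductive step) and the constraint $u \in \alphabet^+$: the degenerate instance $b = \emptyword$ would force $u = \emptyword$, so it must either be excluded or dispatched separately as the trivial equation $x = y$. I expect the main, though modest, obstacle to be threading the identity $(uv)^m u = u(vu)^m$ through the induction so that both displayed forms of $b$ are produced simultaneously at each step.
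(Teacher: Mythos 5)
You should first note that the paper does not actually prove this lemma: it is quoted as a classical result with a citation to Lyndon and Sch\"{u}tzenberger, so there is no in-paper argument to compare against. Your proposal is the standard textbook proof (the rebracketing identity $(uv)^m u = u(vu)^m$, a direct computation for sufficiency, and induction on $\length{b}$ driven by a prefix comparison for necessity), and the forward direction as well as the overall shape of the induction are sound.

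There is, however, one genuine gap, and it is precisely the degeneracy you only half-address at the end. First, as you correctly suspect, the statement as printed is false for $b = \emptyword$: since $u \in \alphabet^+$ is required, $(uv)^k u$ has length at least $1$ and can never equal $\emptyword$, whereas $xb = by$ with $b = \emptyword$ merely says $x = y$. So ``dispatching it separately as the trivial equation $x = y$'' cannot rescue the literal statement; that case must be excluded (the classical formulation avoids the problem by allowing $u \in \alphabet^*$ or by requiring $b \in \alphabet^+$). Second --- and this is the part your write-up does not handle --- the same degeneracy resurfaces \emph{inside} the induction: your inductive step replaces $b$ by $b'$ with $\length{b'} = \length{b} - \length{x}$, so whenever $\length{x}$ divides $\length{b}$ the recursion bottoms out at $b' = \emptyword$, exactly where your base case (which sets $u = b'$) cannot deliver a nonempty $u$. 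The fix is to widen the base case to $0 < \length{b} \leq \length{x}$: there $b$ is a prefix of $x$, so $x = bx'$ with $x'$ possibly empty, cancellation gives $y = x'b$, and $u = b$, $v = x'$, $k = 0$ works because $v = \emptyword$ is permitted while $u = b$ is nonempty. The inductive step then only fires for $\length{b} > \length{x}$, which keeps $b' \neq \emptyword$, and the induction closes.
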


A \emph{circular word} is obtained from a linear word
$w \in \alphabet^*$ if we link its first symbol after the last one, as
seen on Figure \ref{create_circular}.

\begin{figure}
\begin{center}
\includegraphics[scale=1.2]{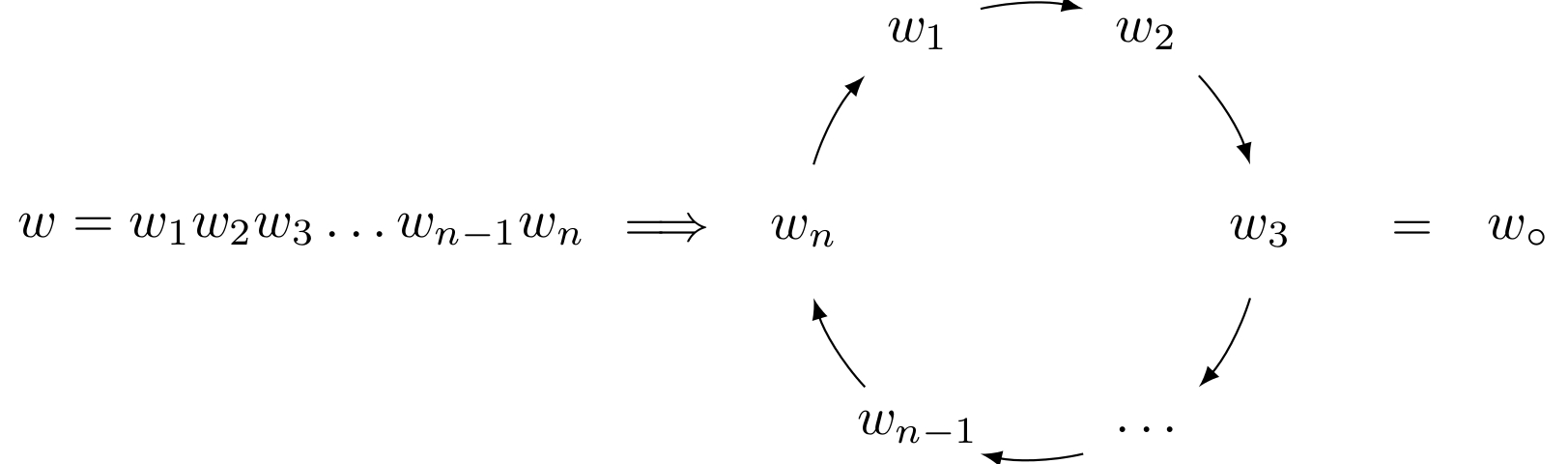}
\caption{Creating the circular word $\cword{w}$ from the linear word $w$.\label{create_circular}}
\end{center}
\end{figure}

One can see from the figure that circular words do not have a beginning
nor an end. Nor do the notions of suffix and prefix make sense.
A circular word $\cword{w}$ can be seen as the set of all conjugates of $w$, 
or all cyclic shifts of $w$, that is, the set
\[
\cword{w} = \{ v ~|~ v~\mbox{ is a conjugate of }~w\} =
\{\shift{w}{\ell}~|~ \ell = 0, \ldots, \length{w}-1\}.
\]

Note, that $\cword{w}$ consists exactly of the length $\length{w}$ factors of
$ww$. That is, $\cword{w} = \fact{\length{w}}{ww}$.
The notions of weak- and strong periods were given in \cite{Hegedus13}.
We will only refer to weak periods in this paper and define them as
follows.

\begin{definition}
The positive integer $p$ is a \emph{weak} (\emph{strong})
\emph{period} of a circular word
$\cword{w}$ if $p$ is a period of at least one (all) of the
conjugates $v \in \cword{w}$.
\end{definition}

\section{Representations with tuples}\label{tuples_reps}

If not stated otherwise, we assume that alphabet $\alphabet$ can be
arbitrary.
Every word $w \in \alphabet^{*}$ can be represented by a power of a
(possibly shorter) word $u \in \alphabet^{*}$ and a positive integer that is the
length of $w$. In other words, for all $w \in \alphabet^{*}$, there
exists a word $u \in \alphabet^{*}$ such that
$u^{\frac{\length{w}}{\length{u}}} = w$. We will call such a $u$ a \emph{root} of $w$, while the shortest root
is called the \emph{primitive root} of $w$ (see e.g., pages 10--11 of \cite{Smyth03}).
In this section we discuss analogous representations of circular words that
take advantage of their lack of strictly specified endpoints.

\begin{definition}
A pair $(u,n) \in \alphabet^* \times \nonnegint$ is a
\emph{representation} of the circular word $\cword{w}$ over $\alphabet$ if
$\length{u} \leq n$, $n = \length{\cword{w}}$ and
$u^{\frac{n}{\length{u}}} \in \cword{w}$.
\end{definition}

\begin{definition}
A \emph{minimal representation} of a circular word $\cword{w}$ over
$\alphabet$ is a representation $(u,n)$ of $\cword{w}$, such that
$\length{u} \leq \length{u'}$ for any other representation $(u',n)$ of
$\cword{w}$.
\end{definition}

It is clear, that every circular word has a minimal representation,
since all of them have a smallest weak period.
Trivially, that not all pairs $(u,n)$ are minimal
representations of some circular word. For example, consider the
representation $(baa,5)$ of the circular word $\cword{(baaba)}$. This
circular word also has a representation $(ab,5)$ which is in fact a
minimal representation.

It is also true, that a circular word can have more than one minimal
representations. For example, $(ababa,12)$, $(babaa,12)$, $(abaab,12)$
and $(baaba,12)$ are all minimal representations of the circular word
$\cword{(ababaababaab)}$. Note, that $(aabab,12)$ is not a minimal
representation of this circular word, since it represents
$\cword{(aababaababaa)}$.

Clearly, if $n = k \cdot \length{u}$ for some $k \in \nonnegint$ in a
minimal representation $(u,n)$, then $(\shift{u}{\ell},n)$ is also a
minimal representation of the same circular word for all
$\ell = 0, \ldots, \length{u} - 1$.

Suppose, that $w = u^mu'$ for some $u \in \alphabet^*$ where $u'$ is a
non empty prefix of $u$ and $m\in \nonnegint\setminus\{0\}$. 
Then for every $k \in \nonnegint$, the word $w' = wu^k$ has a cyclic shift
$\shift{w'}{\length{w}} = u^{k+m}u'$. Thus the circular word
$\cword{w'}$ has a representation $(u,\length{w}+k\cdot\length{u})$.

\begin{theorem}
Let $(u,n)$ be a representation of $\cword{w}$. Suppose, that $u$ has border
$s$, that is, $u = sx = ys$, and $n = 2\cdot \length{u} - \length{s}$.
Then $(y,n)$ is also a representation of $\cword{w}$. Moreover, if $s$ is the
longest non-trivial border of $u$, then $(y,n)$ is a minimal
representation of $\cword{w}$.
\end{theorem}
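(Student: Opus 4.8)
The plan is to treat the two assertions separately, writing $m=\length{u}$, $\ell=\length{s}$ and $k=m-\ell=\length{y}=\length{x}$, and using throughout that $s$ being a border of $u$ means $u$ has period $k$, so that $u$ is a prefix of $y^{\omega}$ and $u=y^{q}y''$ with $q=\lfloor m/k\rfloor$ and $y''=y_{1}\ldots y_{r}$, $r=(m\modd k)$. (When $s=\emptyword$ we have $n=2m$, $y=u$ and everything below is immediate, so I assume $1\leq\ell\leq m-1$, hence $1\leq k\leq m-1$.)

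First I would verify that $(y,n)$ is a representation. Since $n=2m-\ell=m+k$ and $0<\ell<m$, we have $\lfloor n/m\rfloor=1$ and $(n\modd m)=k$, so $u^{\frac{n}{m}}=u\cdot(u_{1}\ldots u_{k})=uy$, because $y$ is exactly the length-$k$ prefix of $u$. As $(u,n)$ is a representation, $uy=u^{\frac{n}{m}}\in\cword{w}$. Now I would rewrite $uy=y^{q}y''y$ and note that moving the trailing block $y$ to the front is a cyclic shift: $\shift{uy}{n-k}=y\cdot y^{q}y''=y^{q+1}y''=y^{\frac{n}{k}}$, the last equality using $n=(q+1)k+r$. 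Hence $y^{\frac{n}{k}}$ is a conjugate of $uy$ and so lies in $\cword{w}$; together with $\length{y}=k\leq n$ this shows $(y,n)$ is a representation of $\cword{w}$.

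For minimality I would first translate the claim: a pair $(z,n)$ is a representation of $\cword{w}$ exactly when some conjugate of $w$ has period $\length{z}$, so $(y,n)$ is minimal iff $k$ is the smallest weak period of $\cword{w}$, i.e. iff no conjugate of $w$ has a period below $k$. The word $y^{\frac{n}{k}}$ from the first part already has period $k$, so $k$ is a weak period and only the lower bound remains. The hypothesis that $s$ is the \emph{longest} non-trivial border of $u$ says precisely that $k$ is the smallest period of $u$; this forces $y$ to be primitive, for otherwise $u$ would be a prefix of $z^{\omega}$ for a word $z$ shorter than $y$ and would acquire a period below $k$. The idea for the bound is to exploit that in $uy=ysy$ the factor $y$ occurs as both prefix and suffix, so on the circle $\cword{w}$ the two copies of $y$ are adjacent and form a square $yy$ of the primitive word $y$ (of length $2k$) across the junction. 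If some conjugate $V$ had period $p<k$ and contained this square as a factor, then $yy$ would inherit period $p$; since $yy$ also has period $k$ and $2k\geq p+k-\gcd(p,k)$, Fine and Wilf would give $yy$ the period $\gcd(p,k)<k$, making $y$ non-primitive, a contradiction.

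The step I expect to be the main obstacle is exactly this last one, because the square $yy$ sits across the circular junction: for those conjugates $V$ whose linear cut falls inside the square, $yy$ is not a contiguous factor of $V$ and Fine and Wilf cannot be applied to a clean copy of $yy$. To repair this I would recast a weak period $p$ in terms of mismatches on the circle. Writing the letters of $\cword{w}$ as $c_{0},\ldots,c_{n-1}$, the integer $p$ is a weak period iff all positions $P$ with $c_{P}\neq c_{(P+p)\modd n}$ can be covered by a single arc of only $p<k$ consecutive positions. I would then show that the $ysy$-structure, through the primitivity of $y$, forces shift-$p$ mismatches that cannot all be squeezed into such a short arc, uniformly over the location of the cut. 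Carrying out this mismatch count is the technical heart of the argument; once it is in place, a period below $k$ is ruled out and $(y,n)$ is a minimal representation.
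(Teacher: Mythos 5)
Your proof of the first assertion is correct and is essentially the paper's argument in different clothing: you compute directly that $u^{\frac{n}{\length{u}}}=uy$ and that $uy$ is conjugate to $y^{\frac{n}{\length{y}}}=y^{q+1}y''$, whereas the paper reaches the same conclusion by applying Lyndon--Sch\"utzenberger to the border $s$ to see that $yys$ has period $\length{y}$. The two routes carry the same content and either one establishes that $(y,n)$ is a representation.

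The minimality half, however, is left with a genuine gap, and you have located it yourself. The Fine--Wilf argument on the square $yy$ only disposes of those conjugates that contain a contiguous occurrence of $yy$. Writing $u=y^{q}y''$ so that $\cword{w}=\cword{(y^{q+1}y'')}$, one checks that for $q\geq 3$ every cut leaves some occurrence of $yy$ intact, so your argument closes those cases; but for $q\in\{1,2\}$ (that is, $\length{u}<3\length{y}$) there are cuts that split every circular occurrence of $yy$ --- for instance, for $q=1$ the conjugate $y\,y''\,y$ --- and for these the contradiction must come from elsewhere. Your proposed repair, recasting a weak period $p$ as the condition that the mismatch set $\{P : c_{P}\neq c_{(P+p)\modd n}\}$ fits into an arc of length $p$ and showing that the $ysy$ structure spreads these mismatches too widely, is a sensible plan, but it is only announced, not executed, and it is exactly where the work lies: weak periods of circular words are precisely the setting in which Fine and Wilf is known to fail, as the paper itself recalls, so the residual conjugates cannot be waved away. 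For comparison, the paper's own justification of minimality is the single sentence that $y$ is the primitive root of $u$ and hence $(y,n)$ is minimal, which does not confront this case either; you have at least identified where the real difficulty sits, but as submitted the proposal does not prove the second assertion.
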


\begin{proof}
 Let us have a representation $(u,n)$ of $\cword{w}$ that satisfies the
 assumption, that is, $u$ has border $s$ and
 $n = 2\cdot \length{u} - \length{s}$. Then $u$ is in the form
 $u = sx = ys$ for some $x,y \in \alphabet^*$ and
 $\cword{w} = \cword{(uy)} = \cword{(ysy)}$. By Lemma \ref{LyndonSchutzenberger}, $yys$
 has period $\length{y}$, thus $\cword{w}$ has weak period $\length{y}$ and a
 representation $(y,n)$.
 
 If $s$ is the longest non-trivial border of $u$, then $y$ is the
 primitive root of $u$, thus $(y,n)$ is a minimal representation of
 $\cword{w}$.
\end{proof}

Suppose that we have a representation $\cword{w} = (u,n)$, where $u \in \alphabet^{*}$
and $n \in \nonnegint$. If $\length{u} \geq 2$, then $u$ may be compressed
further. In other words, we can take a minimal representation $(u',\length{u})$
with an additional parameter $k \in \nonnegint$, such that $\shift{u}{k}$ has
primitive root $u'$. This method of compression can be done finitely many times, until
reaching a word $u_{0}$ which we will refer to as a \emph{minimal root} of $\cword{w}$.
We will call these representations \emph{iterative representations}, defined
formally in Definition \ref{iterative_reps_def}. Of course, if a minimal root of a word $\cword{w}$ has
only one letter, then it is in the form $\cword{(a^{\length{w}})}$ for some $a \in \alphabet$.
In this case, this letter is unique and we can refer to it as the minimal root of $\cword{w}$.
Thus words in these forms have trivial representations and we will no longer deal with them.

\begin{definition}\label{iterative_reps_def}
 Let $u \in \alphabet^{*}$, $m \in \nonnegint\setminus\{0\}$ and $\fromto{\ell_{1}, \ell_{2}}{\ell_{m-1}, \ell_{m}},\fromto{k_{1}, k_{2}}{k_{m-1}} \in \nonnegint$.
 The $2m$-tuple
 \[
 (u,\ell_{1},k_{1},\ell_{2},k_{2},\ldots, \ell_{m-1},k_{m-1},\ell_{m})
 \]
 is an \emph{iterative representation} of the circular word
 $\cword{w} = \cword{(u_{m-1}^{\frac{\ell_{m}}{\ell_{m-1}}})}$ over the two letter
 alphabet $\{a,b\}$, where $u_{0} = u$, $u_{1} = \shift{u_{0}^{\frac{\ell_{1}}{\length{u_{0}}}}}{k_{1}}$
 and
 $u_{i} = \shift{u_{i-1}^{\frac{\ell_{i}}{\ell_{i-1}}}}{k_{i}}$ for all $i = \fromto{2}{m-1}$.
\end{definition}

\begin{example}\label{iterat_repex}
 Consider the circular word $\cword{w} = \cword{(bababaabbabaab)}$. One of its iterative
 representations is\vspace*{-.4em}
 \[
 (baa, 4, 0, 6, 4, 14).\vspace*{-.4em}
 \]
 By using the previous definition of the words $u_{i}$, the following words are
 obtained during the reconstruction of the circular word:
 $u_{0} = baa$, $u_{1} = baab$, $u_{2} = babaab$ and finally,
 $\cword{w} = \cword{(babaabbabaabba)}$. Note, that no shifting is
 required in the last step, because $\cword{w} = \cword{v}$ for all $v \in \cword{w}$.
\end{example}

Of course, every circular word has an iterative representation of the form
above that can be constructed with the greedy algorithm in Figure \ref{iter_alg}. Moreover,
the algorithm halts if only if it has found a minimal root.

\begin{figure}[h]
\noindent\textbf{\texttt construct\_iterative\_representation($\cword{w}$)}
\vspace{-.5em}
\begin{enumerate}
 \item $u \leftarrow w$
 \item \texttt{$v \leftarrow $ find $v$ such that $(v,\length{w})$ is a minimal representation of $\cword{w}$}
 \item \texttt{$rep \leftarrow [\length{w}]$} \hfill \# rep is a vector of integers
 \item \texttt{while ~true~ do}
 \item \quad $u \leftarrow v$
 \item \texttt{\quad $v \leftarrow $ find $v$ such that $(v,\length{u})$ is a minimal representation of $\cword{u}$}
 \item \quad \texttt{if $\length{u} = \length{v}$ then} \hfill \# if we have found a minimal root,
 \item \qquad \texttt{break} \hfill \# then the algorithm breaks the loop
 \item \quad \texttt{endif}
 \item \quad \texttt{$k \leftarrow $ find $k$ such that $\shift{u}{-k}$ has root $v$}
 \item \quad \texttt{$rep \leftarrow \length{u}:k:rep$} \hfill \# append $\length{u}$ and $k$ to $rep$ from the left
 \item \texttt{endwhile}
 \item \texttt{return $v:rep$}
\end{enumerate}
\caption{Algorithm for constructing the iterative representation of $\cword{w}$.\label{iter_alg}}
\end{figure}

Note, that by using this algorithm, we can process the iterative representation in
Example \ref{iterat_repex} further to obtain $(ab, 3, 1, 4, 0, 6, 4, 14)$.
In fact, the following can be stated about the iterative representations of circular words
over the two letter alphabet $\{a,b\}$.

\begin{theorem}\label{twoletteriter}
 Let $w \in \{a,b\}^{*}$. If $(u,\ell_{1},k_{1},\ldots, \ell_{m-1},k_{m-1},\length{w})$
 is a minimal iterative representation of $\cword{w}$, then $\length{u} \leq 2$.
\end{theorem}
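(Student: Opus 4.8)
My plan is to translate the minimality of the iterative representation into a purely structural property of $u$ and then to exploit that the alphabet has only two letters. First I would observe that the leading component $u=u_{0}$ of a minimal iterative representation is a minimal root of $\cword{w}$: by the greedy procedure in Figure \ref{iter_alg}, the loop terminates exactly when the current word cannot be compressed any further, i.e. when a minimal representation $(v,\length{u})$ of $\cword{u}$ satisfies $\length{v}=\length{u}$. Since $\length{v}$ is by construction the smallest weak period of $\cword{u}$ --- the least period attained by any conjugate of $u$ --- this equality says that no conjugate of $u$ has a period strictly below $\length{u}$. Using the equivalence recorded in the Preliminaries between having a non-trivial border and having a period shorter than the length, I would conclude that \emph{every} conjugate of $u$ is unbordered.

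The second step is to squeeze this condition over $\{a,b\}$. An unbordered word cannot have a border of length $1$, so its first and last letters differ. Applying this to every conjugate $\shift{u}{\ell}$, whose first and last letters are $u_{\ell+1}$ and $u_{\ell}$ (read cyclically, with $u_{0}:=u_{\length{u}}$), forces $u_{i}\neq u_{i-1}$ for all $i$, indices taken modulo $\length{u}$. Over a two-letter alphabet this is a proper $2$-colouring of the cyclic sequence of positions, so $u$ must strictly alternate between $a$ and $b$.

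Finally I would split on the parity of $n=\length{u}$ and assume $n\geq 3$ for contradiction. If $n$ is odd, strict cyclic alternation between two letters is impossible (an odd cycle is not $2$-colourable), which is an immediate contradiction. If $n$ is even with $n\geq 4$, then up to renaming the letters $u=(ab)^{\frac{n}{2}}$; but this word begins and ends with $ab$, so $ab$ is a non-trivial border of $u$, contradicting that $u=\shift{u}{0}$ is itself an unbordered conjugate. Hence $n\leq 2$. The genuinely substantive step is the first one --- correctly identifying the halting condition of the algorithm with the statement that all conjugates of $u$ are unbordered; once that reduction is in place, the binary alphabet makes the combinatorics collapse into the short parity dichotomy above.
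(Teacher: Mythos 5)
Your proof is correct and follows essentially the same route as the paper: reduce minimality of the iterative representation to the statement that every conjugate of $u$ is unbordered, and then show that over $\{a,b\}$ this forces $\length{u} \leq 2$. The paper's own proof is a one-liner that merely asserts the key combinatorial fact (every binary word of length at least $3$ has a conjugate with a non-trivial border), whereas you actually justify it via the alternation and parity argument --- a worthwhile completion of a detail the paper leaves implicit.
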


\begin{proof}
 It follows from the fact that every word $u \in \{a,b\}$, $\length{u} \geq 3$
 has a conjugate that has a border of length at least one, thus in this case
 $\cword{u}$ has a representation $(v,\length{u})$ such that $\length{v} < \length{u}$.
\end{proof}

Let $(u,\ell_{1},k_{1},\ldots, \ell_{m-1},k_{m-1},\length{w})$ be an iterative
representation of $\cword{w}$. It is \emph{optimal} if for all iterative representations
$(u',\ell'_{1},k'_{1},\ldots, \ell'_{m'-1},k'_{m'-1},\length{w})$ of $\cword{w}$,
$\length{u} \leq \length{u'}$ and if $\length{u} = \length{u'}$, then $m \leq m'$.
In other words an optimal iterative representation of $\cword{w}$ is one with
the shortest possible minimal root,
such that $\cword{w}$ can be reconstructed from it with the least amount of
fractional power operations (regardless of the amount of shift operations required).

The algorithm may not provide an optimal solution for all inputs $\cword{w}$.
For example, consider the circular word $\cword{(ababaa)}$. The algorithm
would construct the iterative representation $(ab, 3, 0, 4, 0, 6)$, while
an optimal solution would be $(ab, 5, 0, 6)$.
One of the directions of future research is to look for an efficient
algorithm that always finds an optimal iterative representation of
any circular word $\cword{w}$ (see Section \ref{concs}).

Note, that we do not have to restrict ourselves to representations of circular words.
If we are looking for a linear word, another shift operation has to be applied
at the end of the reconstruction.

Let us now turn to another method of representation, which is not intended
as an encoding, nor as a compression, but a way of representing the structure of different
conjugates of a word and their relation to each-other (e.g., common prefixes).

\section{Representations with trees}\label{trees_reps}

The tree $\tau$ is the tree of the circular word $\cword{w}$ if and only
if for any word $v = \fromtow{v_1}{v_n}$ in $\cword{w}$, there exists
a path in $\tau$ between the root and a leaf node
with a series of edges labeled $\fromto{v_1}{v_n}$.

This approach is related to tries that are data structures
representing associative structures. They are often used to search for
suffixes or other factors of words. Quite similarly, our trees represent
a set of words that are conjugates of each-other. For more information on
the use of tries consult \cite{Crochemore02}.

We remark, that in our figures the letters appear as nodes, but they
are to be considered as labels of edges between two (unnamed) nodes.
This way, the represented words can be seen more clearly.
First, consider the circular word
\[
\cword{(abaab)} = \{ abaab, baaba, aabab, ababa, babaa\}.
\]
Its tree representation is shown in Figure \ref{det-tree1}.

\begin{figure}
\begin{center}
\begin{minipage}[c]{.45\linewidth}
\vspace*{6em}
\begin{center}
\includegraphics[scale=1.2]{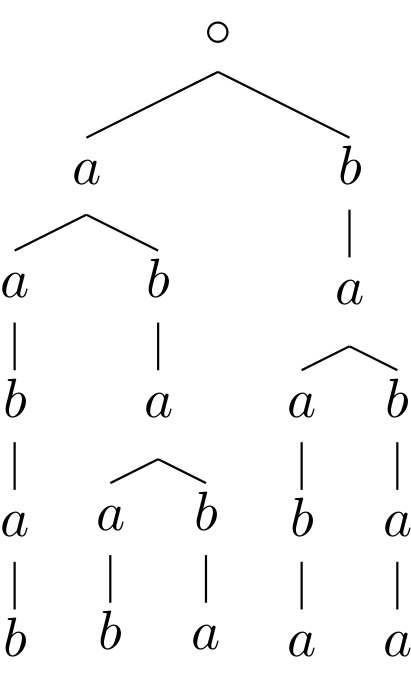}
\end{center}
\vspace*{-1em}
\caption{Tree representation of $\cword{(abaab)}$.\label{det-tree1}}
\end{minipage}
\begin{minipage}[c]{.45\linewidth}
\begin{center}
\includegraphics[scale=1.2]{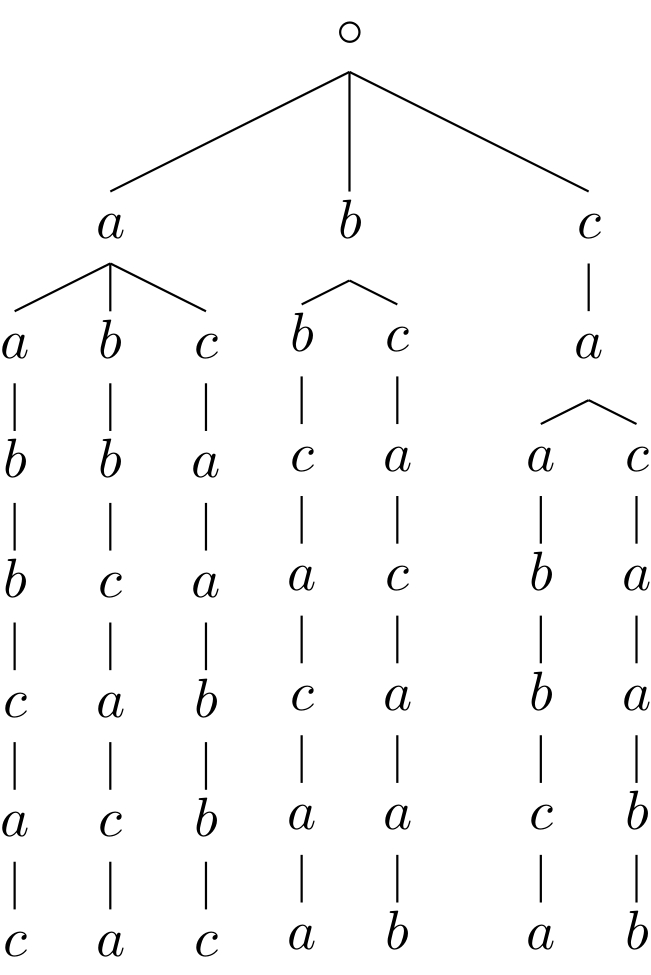}
\end{center}
\vspace*{-1em}
\caption{Tree representation of $\cword{(aabbcac)}$.\label{det-tree2}}
\end{minipage}
\end{center}
\end{figure}

Now, see Figure \ref{det-tree2} for the tree of the circular word $\cword{(aabbcac)}$ (over
the three letter alphabet $\{a,b,c\}$) which is the set
\[
\cword{(aabbcac)} = \{ aabbcac, abbcaca, bbcacaa, bcacaab, cacaabb, acaabbc, caabbca\}.
\]

Clearly, both trees represent finite-state automata with partially
defined, deterministic transition functions. We can distinguish different
levels of a tree. Vertex $\TR$ is on level zero ($\ell(\TR) = 0$) and if
there is an edge $u \to v$, then $\ell(v) = \ell(u) + 1$.

We can see some branching nodes in both trees. The tree in Figure \ref{det-tree2}
has two branching nodes on level one while no two branching nodes
of the tree in Figure \ref{det-tree1} are on the same level.

Examining branching nodes is useful for analyzing trees of
circular words and the words themselves. Suppose that tree $\tau$ has
$\fromto{u_1}{u_k}$ branching nodes such that $\TR \to^a u_1$ and
$u_i \to^a u_{i+1}$ for all $i = \fromto{1}{k-1}$. Then there is a letter $b$ such that $a^k$,
$a^{k-1}b$, and thus $\fromto{a^{k-2}b}{ab}$, $b$ are all factors of $\cword{w}$. If
the level of the leaf nodes is $k+1$, then the represented circular word
must be $\cword{(a^kb)}$. Similarly, if there are branching nodes
$\fromto{u_1}{u_m}$ and $\fromto{v_1}{v_k}$ such that
$\TR \to^a u_1 \to^a \ldots \to^a u_m$ and
$\TR \to^b v_1 \to^b \ldots \to^b v_k$,
and the level of the leaf nodes is $m+k$, then the tree can only
represent the circular word $\cword{(a^mb^k)}$. Apart from these simple cases,
we can state the following about the relation of circular words
and branching nodes in their trees:
 Let $\cword{w}$ be a circular word with tree $\tau$. There is a branching
 node in $\tau$ on level $\ell$ if and only if there are two distinct words
 $w',w'' \in \cword{w}$, such that the longest common prefix of $w'$ and $w''$
 is a word of length $\ell$.
 Moreover, if there is a branching node in the tree on level $n > 0$,
 then there is a branching node on level $n-1$. These nodes do not
 necessarily lie on the same path.
 To verify this, assume that tree $\tau$ contains the edges $u \to^a v$ and $u \to^b s$,
 where $u \neq \TR$.
 Then there are words $xay, xbz \in \cword{w}$ such that $x,y,z \in \alphabet^*$
 with $\length{x} > 0$, and
 $a,b \in \Sigma$, where $\Sigma$ is an alphabet of at least two letters.
 Write $x = \fromto{x_1}{x_m}$.
 Clearly, both $\fromtow{x_2}{x_m}ayx_1$ and
 $\fromtow{x_2}{x_m}bzx_1$ are in $\cword{w}$, having a common prefix of
 length $\length{x} - 1$. Thus there must be a node $u'$ such that the
 path from $\TR$ to $u'$ reads $\fromtow{x_2}{x_m}$ and two
 nodes $v'$ and $s'$, such that $u' \to^a v'$ and $u' \to^b s'$.

\begin{proposition}\label{branchingnodes}
 Consider a circular word $\cword{w} \in \{a, b\}$ with tree $\tau$.
 If $\tau$ has a branching node on level $\length{w} - 2$, then
 there is exactly one branching node on all levels $m = \fromto{0}{\length{w}-2}$
 of $\tau$.
\end{proposition}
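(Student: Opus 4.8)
The plan is to convert the statement about the \emph{distribution} of branching nodes across levels into a pure \emph{counting} argument, combining two ingredients already available in the excerpt: the ``descent'' property (a branching node on level $m > 0$ forces a branching node on level $m-1$), and the correspondence between branching nodes on level $\ell$ and pairs of distinct conjugates whose longest common prefix has length $\ell$. Write $n = \length{w}$ and let $d$ be the number of distinct conjugates of $w$. Since $\cword{w} = \{\shift{w}{\ell} \mid \ell = 0, \ldots, n-1\}$, we immediately have $d \leq n$.

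First I would record the structural picture of $\tau$. Every conjugate has length $n$, so every root-to-leaf path has length $n$ and all leaves sit on level $n$; and since the alphabet is $\{a,b\}$, each node has at most two children, a branching node being precisely a node with two children. Distinct conjugates correspond to distinct root-to-leaf paths, and hence to distinct leaves (and conversely every leaf is read off by some conjugate), so $\tau$ has exactly $d$ leaves. The elementary identity for binary trees --- that the number of leaves equals the number of two-child nodes plus one --- then shows that the total number of branching nodes of $\tau$ equals $d - 1$. Together with $d \leq n$ this gives an upper bound: $\tau$ has at most $n-1$ branching nodes in all.

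Next I would exploit the hypothesis. Applying the descent property repeatedly, starting from the assumed branching node on level $\length{w}-2$, yields a branching node on each of the levels $\length{w}-3, \length{w}-4, \ldots, 1, 0$ as well; these nodes need not lie on a common path, but that is irrelevant for counting. Thus each of the $n-1$ levels $0, 1, \ldots, \length{w}-2$ carries at least one branching node, so $\tau$ has at least $n-1$ branching nodes in total. Combining with the upper bound $n-1$ from the previous paragraph forces equality: $d = n$ and $\tau$ has exactly $n-1$ branching nodes, with at least one on each of the $n-1$ levels in $\{\fromto{0}{\length{w}-2}\}$. By pigeonhole this is possible only if there is exactly one branching node on each such level (and none on level $\length{w}-1$), which is the claim.

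The main obstacle is the first paragraph rather than the descent step: I must justify the leaf-counting identity for these trees and, slightly more delicately, argue cleanly that distinct conjugates are in bijection with the leaves (two distinct conjugates differ in some position, so their paths diverge and terminate at different level-$n$ nodes, while every leaf is reached by reading some conjugate). Once the count ``(number of branching nodes)$\,= d-1$'' and the bound $d \leq n$ are in place, the hypothesis does all the remaining work through the descent property together with a one-line pigeonhole, so no periodicity or Fine--Wilf type input is required.
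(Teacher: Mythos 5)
Your proof is correct and follows essentially the same route as the paper's: the descent property gives at least one branching node on each of the $\length{w}-1$ levels $0,\ldots,\length{w}-2$, and a count of root-to-leaf paths against branching nodes forces exactly one per level. The only (welcome) refinement is that you bound the number of leaves by $\length{w}$ and deduce equality, whereas the paper simply asserts that the number of paths at leaf level equals $\length{w}$, which tacitly presumes $w$ has $\length{w}$ distinct conjugates.
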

\begin{proof}
 From the previous argument, it follows that all levels $k < \length{w} - 2$
 of the tree has at least one branching node. Clearly, the depth of the tree is
 $\length{w}$. Since the root node is branching, the number of possible
 paths (words) up to level one is two.
 Moreover, if level $k > 0$ has $m_{k} \in \nonnegint$ branching nodes,
 then the number of all possible paths up to level $k + 1$ is
 equal to the number of all possible paths up to level $k$, plus $m_{k}$.
 Then we get that the number of possible paths on the level of the leaf nodes is
 $2 + m_{1} + \ldots + m_{\length{w}-1} + m_{\length{w}} = \length{w}$.
 We have stated, $m_{i} > 0$ for all $i = \fromto{1}{\length{w}-2}$, thus
 $m_{\length{w}-1} = m_{\length{w}} = 0$ and
 $2 + m_{1} + \ldots + m_{\length{w}-2} = \length{w}$.
 If $m_{i} > 1$ for any $i \geq 1$, then $m_{j} = 0$
 for some $j \neq i$. This is impossible, since all levels under $\length{w}-2$
 have at least one branching node, thus $m_{i} = 1$ for all
 $i = \fromto{1}{\length{w}-2}$.
\end{proof}

Now, let us analyze an interesting class of words.
Let $f_{1} = b$, $f_{2} = a$ and define $f_{n} = f_{n-1}f_{n-2}$ for all
$n \geq 3$. We call $f_{n}$ (where $n \geq 1$) the
\emph{$n$th finite Fibonacci word}. The \emph{infinite Fibonacci word}
is the limit of the sequence $f_{1}, f_{2}, \ldots$

The following lemma describes a well known property of the infinite
Fibonacci words. 

\begin{lemma}[see S\'{e}\'{e}bold \cite{seebold85}]\label{seebold}
If a word $u^{2}$ is a factor of the infinite Fibonacci word, then u is
a conjugate of some finite Fibonacci word. \qed
\end{lemma}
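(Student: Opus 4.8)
The plan is to prove the statement by induction on $\length{u}$, using the Fibonacci morphism $\varphi\colon a\mapsto ab,\ b\mapsto a$, of which the infinite Fibonacci word $\mathbf{f}$ is the fixed point, so that $\mathbf{f}=\varphi(\mathbf{f})$ and $\varphi(f_{m})=f_{m+1}$ for all $m\ge 1$. What makes the induction descend is a \emph{recognizability} property: since the only images are $\varphi(a)=ab$ and $\varphi(b)=a$, every block of the $\varphi$-factorisation of $\mathbf{f}$ begins with, and contains exactly one, letter $a$; hence a block boundary sits immediately before every occurrence of $a$, the inverse map $ab\mapsto a$, $a\mapsto b$ is well defined on factors of $\mathbf{f}$, and it returns shorter factors of $\mathbf{f}$. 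I would also record two facts used throughout: a conjugate of any $f_{n}$ is primitive, and $\varphi$ preserves the conjugacy relation $\sim$, since $\varphi(xy)=\varphi(x)\varphi(y)$ is a conjugate of $\varphi(y)\varphi(x)=\varphi(yx)$.

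First I would reduce to the case that $u$ is primitive. If $u=z^{k}$ with $k\ge 2$, then $u^{2}=z^{2k}$ is at least a fourth power; since $\mathbf{f}$ contains no fourth power (a standard fact, also recoverable from the descent below) this is impossible, so $u$ is primitive. The base cases $\length{u}\le 2$ are checked directly: as $\mathbf{f}$ contains neither $bb$ nor $aaa$, the only squares of period at most $2$ are $aa$, $abab$ and $baba$, giving $u\in\{a,ab,ba\}$, which are conjugates of $f_{2}=a$ and $f_{3}=ab$.

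For the inductive step, take $u^{2}$ a factor with $\length{u}\ge 3$ and fix an occurrence in $\mathbf{f}$. Because the two copies of $u$ are identical they carry the same pattern of $a$'s, hence the same internal block boundaries, so the $\varphi$-factorisation is periodic with period $u$ away from the point where the two copies abut. Desubstituting the block-aligned core via $ab\mapsto a$, $a\mapsto b$ produces a factor of $\mathbf{f}$ of the form $hh'$ with $\length{h}=\length{h'}$, whose two halves agree except possibly in their last letter; it is strictly shorter than $u$ because each block has length $1$ or $2$ and at least one has length $2$. Granting (see below) that this core is in fact a genuine square $v^{2}$ with $\length{v}<\length{u}$, the induction hypothesis gives $v\sim f_{m}$ for some $m$, whence $u\sim\varphi(v)\sim\varphi(f_{m})=f_{m+1}$; counting blocks then confirms $\length{u}=F_{m+1}$.

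The hard part will be the boundary bookkeeping in the descent: the occurrence of $u^{2}$ need not start at a block boundary, and where the two copies abut the straddling block can turn the desubstituted core into an \emph{almost} square rather than a clean one — precisely the phenomenon that $f_{m}f_{m-1}$ and $f_{m-1}f_{m}$ agree except in their last two letters. I would neutralise this by passing to a conjugate of $u$ that begins at a block boundary (legitimate since the conclusion is only up to conjugacy, and admissible because every factor of $\mathbf{f}$, in particular $u^{2}$, recurs, so an occurrence with room to the right is available), and then upgrading the one-letter discrepancy to a genuine period by a short bordered-word argument in the spirit of Lemma \ref{LyndonSchutzenberger}. The remaining length identities, such as $\length{f_{m+1}}=\length{f_{m}}+\length{f_{m-1}}$, are routine.
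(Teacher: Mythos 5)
First, a remark on the comparison you were asked to survive: the paper does \emph{not} prove Lemma \ref{seebold} at all --- it is imported from S\'{e}\'{e}bold's 1985 paper and closed with an immediate end-of-proof mark --- so there is no in-paper argument to measure you against, and your sketch has to stand on its own. Its skeleton is the standard route: induction on $\length{u}$ via the morphism $\varphi$, recognizability of the block decomposition (a boundary immediately before every $a$), preservation of conjugacy by $\varphi$, reduction to primitive $u$, and the base cases $\length{u}\leq 2$. Those parts are essentially sound, with two cautions. Do not claim that fourth-power-freeness is ``recoverable from the descent below'': the descent is exactly what you are constructing, so invoke it as an independent, separately provable fact (Karhum\"{a}ki). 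And your justification for aligning $u^{2}$ to a block boundary points the wrong way: having ``room to the right'' tells you nothing about which letter follows the occurrence, which is what a right shift would need. The correct (and easy) observation is that every $b$ in $\mathbf{f}$ is preceded by an $a$; hence if $u$ begins with $b$ then $u_{\length{u}}=a$ (it precedes the $b$ opening the second copy) and the letter just before the occurrence is also $a$, so $\bigl(\shift{u}{-1}\bigr)^{2}$ occurs one position to the left and starts at a block boundary.

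The genuine gap is the ``almost square'' case, which you correctly identify but do not actually resolve. After aligning, write $u=\varphi(v)$. If $u$ ends in $a$ and the letter following this occurrence of $u^{2}$ is $b$, the block covering the last letter of the second copy is $ab$ and straddles the right end, so the desubstituted core is $vv'$ with $v=pb$ and $v'=pa$: this is not $v^{2}$, nor the square of any conjugate of $v$. Your proposed repair --- ``a short bordered-word argument in the spirit of Lemma \ref{LyndonSchutzenberger}'' --- does not go through: the two desubstituted halves agree only on their first $\length{v}-1$ letters, so the largest power of $v$ you can read off is $pbp=v^{(2\length{v}-1)/\length{v}}$, an exponent strictly below $2$, and no border/period manipulation promotes an exponent-$(2-1/\length{v})$ occurrence to a genuine square. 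The case is not vacuous: for $u=aba$, the occurrence $abaabab$ at the very start of $\mathbf{f}$ desubstitutes to $ab\,aa$, and the induction is rescued only because a \emph{different} occurrence, $abaabaa$, desubstitutes to the clean square $abab$. So the missing lemma is precisely that some occurrence of $u^{2}$ is followed by $a$ (equivalently, that the square of the conjugate $\shift{u}{-1}$ also occurs), or else a strengthened induction hypothesis covering the near-squares $pb\,pa$ --- which is exactly the $f_{n}f_{n-1}$ versus $f_{n-1}f_{n}$ phenomenon you allude to. Until one of these is supplied the descent does not close, and this step is where the real content of S\'{e}\'{e}bold's theorem lies.
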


Note that the tree in Figure \ref{det-tree1} represents the circular word
obtained from $f_{5}$ which is the fifth Fibonacci word. See the
trees of $\cword{(f_{6})}$ and $\cword{(f_{7})}$ in Figure \ref{treef6f7}.
One can observe that the structure of these trees are very similar.
This is strongly related to the definition of Fibonacci words.

\begin{figure}[h]
\begin{center}
\begin{minipage}{.43\linewidth}
\vspace*{2.5em}
\begin{center}
 \includegraphics[scale=1.1]{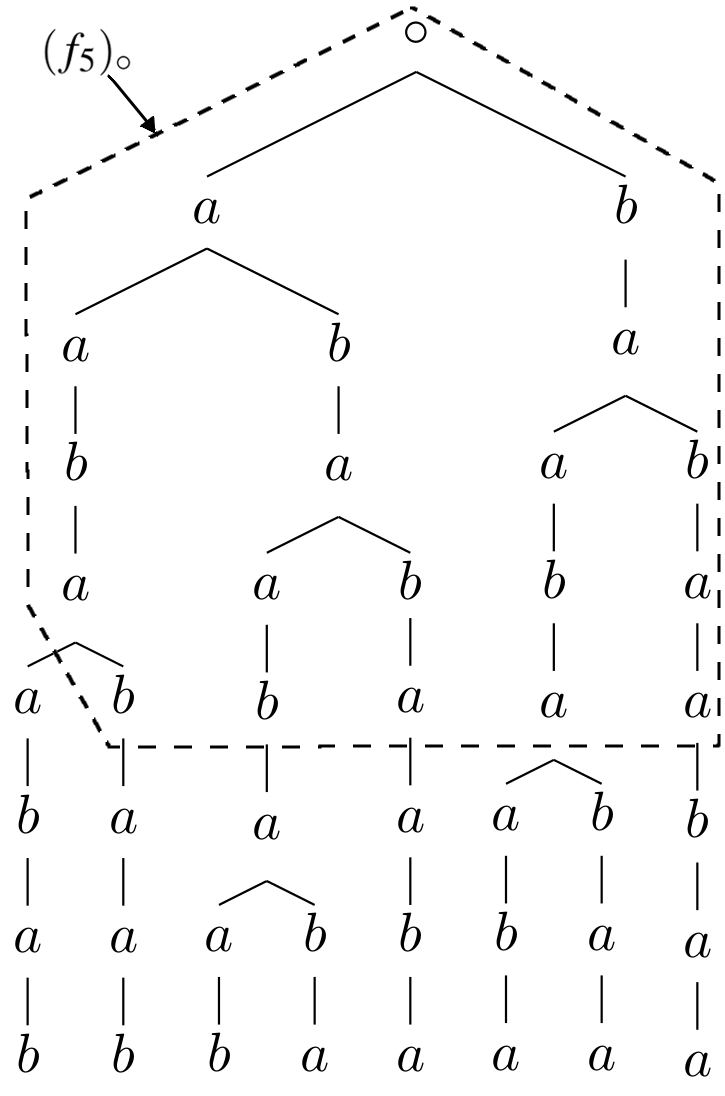}
 \end{center}
\end{minipage}
\begin{minipage}{.43\linewidth}
\begin{center}
 \includegraphics[scale=.75]{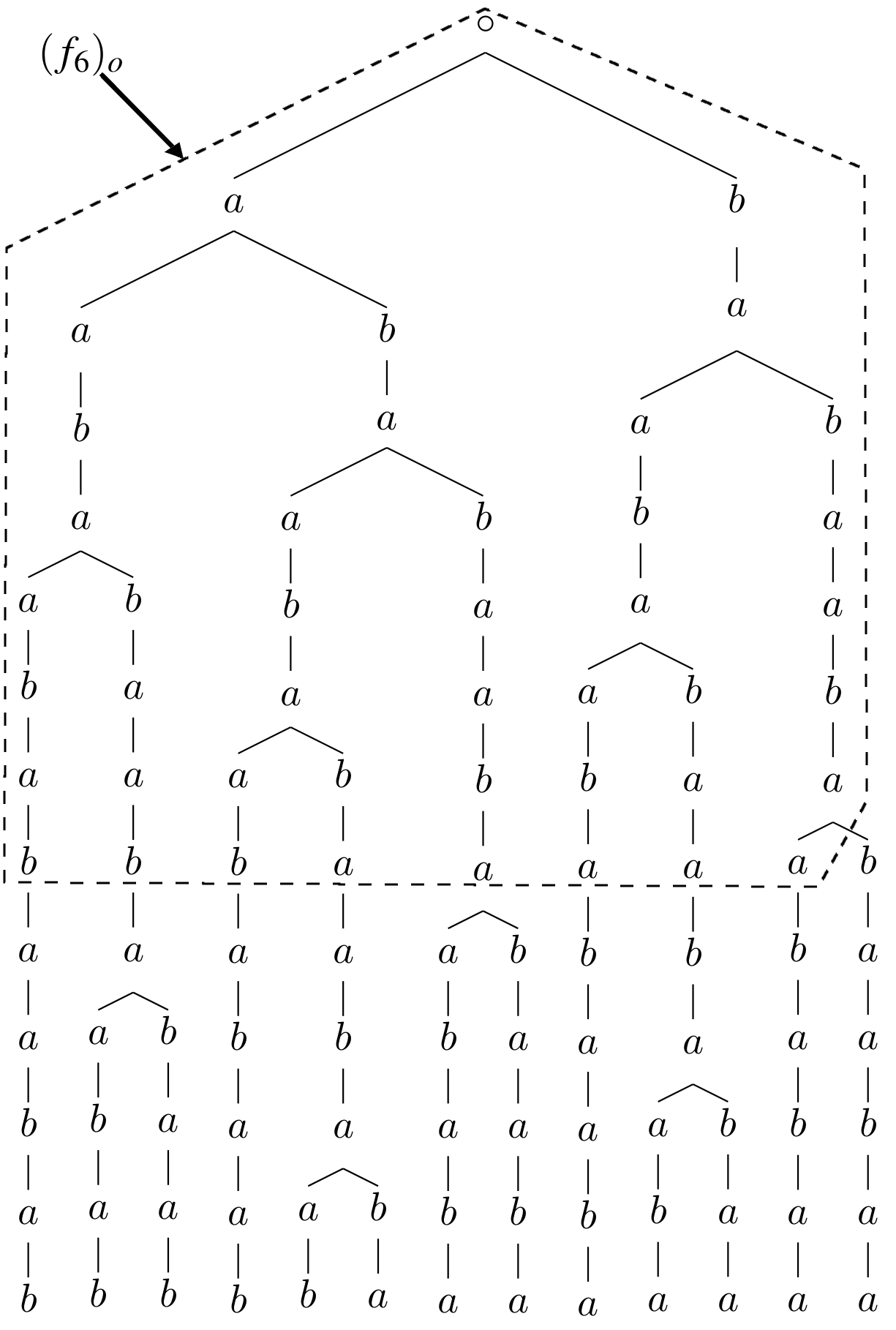}
 \end{center}
\end{minipage}
\end{center}
\vspace*{-1em}
 \caption{Trees of $(f_{6})_{o}$ and $(f_{7})_{o}$.\label{treef6f7}}
\end{figure}

\begin{theorem}\label{fibo-branching}
Let us denote the tree of the
finite Fibonacci word $f_{i}$ by $\varphi_{i}$ for all $i \in \nonnegint$.
Then for all $i \in \nonnegint$, the tree $\varphi_i$ has exactly one branching node on all
of its levels, except for the last two.
\end{theorem}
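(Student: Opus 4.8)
The plan is to connect the branching-node structure of the Fibonacci word's tree to the weak periods / borders of the conjugates, and then use Proposition~\ref{branchingnodes} together with the Séébold property (Lemma~\ref{seebold}) to pin down exactly where branching occurs. The key observation to establish first is that the tree $\varphi_i$ has a branching node on level $\length{f_i}-2$. Once that single fact is in hand, Proposition~\ref{branchingnodes} does all the remaining work: it immediately yields that \emph{every} level from $0$ up to $\length{f_i}-2$ carries exactly one branching node, and that levels $\length{f_i}-1$ and $\length{f_i}$ carry none. So the theorem reduces to the existence of one deep branching node.

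First I would recall that by Proposition~\ref{branchingnodes} a branching node on level $\ell$ corresponds exactly to two distinct conjugates $w',w'' \in \cword{f_i}$ whose longest common prefix has length $\ell$. Thus a branching node on level $\length{f_i}-2$ means there exist two conjugates agreeing on their first $\length{f_i}-2$ letters but differing at position $\length{f_i}-1$. Equivalently — and this is the cleaner formulation — it means some conjugate of $f_i$ has a border of length $\length{f_i}-2$, i.e.\ $f_i$ (circularly) has a weak period equal to $2$. So the core claim I must verify is that $\cword{f_i}$ admits weak period $2$: some conjugate of $f_i$ has period $2$ in its length-$(\length{f_i}-2)$ prefix, but no conjugate has period $1$ (which would force level $\length{f_i}-1$ to branch as well and collapse the word to $\cword{(a^k b)}$-type, excluded for $i \ge 3$).

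The substantive step, and \textbf{the main obstacle}, is proving that a conjugate of $f_i$ genuinely has the short border of length $\length{f_i}-2$ and that this is forced by the Fibonacci recurrence rather than accidental. Here I would argue inductively using $f_n = f_{n-1}f_{n-2}$ and the well-known fact that $f_{n-1}f_{n-2}$ and $f_{n-2}f_{n-1}$ differ only by swapping their last two letters; this near-commutation is precisely what produces two conjugates sharing a prefix of length $\length{f_i}-2$. Concretely, I expect to exhibit the two specific conjugates (the rotations realizing $f_{n-1}f_{n-2}$ and $f_{n-2}f_{n-1}$) and show their longest common prefix has the required length, invoking Lemma~\ref{seebold} to rule out any \emph{longer} common prefix: a longer agreement would force a square $u^2$ with $u$ not conjugate to a Fibonacci word, contradicting Séébold's lemma. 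The base cases ($i=3,4,5$, the last matching Figure~\ref{det-tree1}) are checked directly.

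Finally I would assemble the pieces: having produced a branching node at level $\length{f_i}-2$, I apply Proposition~\ref{branchingnodes} verbatim to conclude there is exactly one branching node on each level $m = 0,\ldots,\length{f_i}-2$ and none afterwards, which is the statement that $\varphi_i$ branches exactly once on every level except the last two. The only delicate bookkeeping is the induction that transfers the branching structure of $\varphi_{i-1}$ and $\varphi_{i-2}$ into $\varphi_i$; I anticipate that the visual self-similarity noted for $\cword{(f_6)}$ and $\cword{(f_7)}$ in Figure~\ref{treef6f7} reflects exactly this recursive embedding, and making that embedding precise is where most of the real work lies.
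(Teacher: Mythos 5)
Your proof is correct, but it takes a genuinely different route from the paper's. The paper exhibits no particular branching node; it counts instead: every node on level $k$ of $\varphi_i$ corresponds to a distinct length-$k$ factor of $\cword{(f_i)}$, all of which occur in the infinite Fibonacci word (since $f_i^{2}$ is a factor of it), and the factor complexity of the infinite Fibonacci word is $k+1$; hence level $k$ carries at most $k+1$ nodes and so at most one branching node, while the tree of a primitive word of length $n$ must accumulate enough branching nodes to produce $n$ leaves, forcing exactly one per level up to $\length{f_i}-2$. You instead isolate the single fact that $\varphi_i$ branches on level $\length{f_i}-2$ --- obtained from the classical near-commutation identity that the conjugates $f_{i-1}f_{i-2}$ and $f_{i-2}f_{i-1}$ of $f_i$ agree except on their last two letters --- and then let Proposition~\ref{branchingnodes} do all the counting. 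This is arguably more economical: it reuses a result already proved in the paper and needs only an elementary induction on the recurrence $f_n=f_{n-1}f_{n-2}$, whereas the paper imports the Sturmian factor-complexity theorem (which, on the other hand, is exactly what makes the paper's argument promising for the conjectured generalization to all standard Sturmian words in the conclusions). Two small simplifications to your plan: the appeal to Lemma~\ref{seebold} to rule out a longer common prefix is superfluous, since the two exhibited conjugates already differ at position $\length{f_i}-1$ by the near-commutation identity itself; and no recursive transfer of the branching structure of $\varphi_{i-1}$ and $\varphi_{i-2}$ into $\varphi_i$ is needed once Proposition~\ref{branchingnodes} is invoked, so the ``delicate bookkeeping'' you anticipate at the end can simply be dropped.
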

\begin{proof}
Consider the tree $\varphi_{i}$ of the circular Fibonacci  word
$\cword{(f_{i})}$ and let $\ell \in \{\fromto{0}{\length{f_{i}}}\}$.
The paths from $\TR$ to nodes on level $k$ represent the length $k$
factors of $\cword{(f_{i})}$. By the properties of Fibonacci words (or
Sturmian words), we know that the number of distinct factors of length
$k$ in the infinite Fibonacci word is $k + 1$. Since all of the
length $k$ words of the tree appear in the infinite Fibonacci word
(because it has factor $f_{i}^{2}$),
their number must not be more than $k + 1$. On the other hand, each
tree of a primitive word of length $n$ must contain $n$ branching nodes.
Thus in $\varphi_{i}$ all branching nodes must be on different levels.
\end{proof}

Based on the proof, we can state the following about the trees of circular Fibonacci words.

\begin{corollary}
For all $i,j \in \nonnegint\setminus\{0\}$, if $j > i$, then $\varphi_i$
is a subtree of $\varphi_j$.
\end{corollary}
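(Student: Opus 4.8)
The plan is to recast the statement about trees as one about factor sets, and then read the tree structure back off from those sets. Write $F$ for the infinite Fibonacci word and recall its factor complexity: there are exactly $k+1$ distinct factors of length $k$, and I will call this set $L_k$. For the circular word $\cword{(f_i)}$, let $C_k(i)$ denote its set of length-$k$ factors, equivalently the labels of the paths from $\TR$ to the level-$k$ nodes of $\varphi_i$. Since $f_i^2$ is a factor of $F$ (as used in the proof of Theorem~\ref{fibo-branching}), every cyclic factor of $f_i$ of length at most $\length{f_i}$ occurs in $F$, so $C_k(i) \subseteq L_k$ for all $k \le \length{f_i}$.

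First I would pin down the node counts of $\varphi_i$. By Theorem~\ref{fibo-branching} the tree has exactly one branching node on each of the levels $0, \ldots, \length{f_i}-2$ and none on the last two levels. Over the binary alphabet $\{a,b\}$ every non-leaf node has one child unless it branches, in which case it has two, so the number of nodes on level $k+1$ equals the number on level $k$ plus the number of branching nodes on level $k$. This yields $\length{C_k(i)} = k+1$ for $k = 0, \ldots, \length{f_i}-1$ and $\length{C_{\length{f_i}}(i)} = \length{f_i}$ at the leaf level. Comparing $\length{L_k}=k+1$ with the inclusion $C_k(i) \subseteq L_k$, I conclude that $C_k(i) = L_k$ for every $k \le \length{f_i}-1$, while at the bottom level the leaves of $\varphi_i$ (the conjugates of $f_i$) form all but exactly one of the $\length{f_i}+1$ words of $L_{\length{f_i}}$.

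Now fix $j > i$; for $i \ge 2$ one has $\length{f_j} > \length{f_i}$, so the same computation applied to $\varphi_j$ gives $C_k(j) = L_k$ for all $k \le \length{f_j}-1$, and in particular for all $k \le \length{f_i}$. Hence $C_k(i) = L_k = C_k(j)$ for every $k \le \length{f_i}-1$. Because the labelled shape of a trie up to a given level is completely determined by the factor sets of each length up to that level, $\varphi_i$ and $\varphi_j$ agree — as rooted labelled trees — on all nodes and edges up to level $\length{f_i}-1$. To place the leaves of $\varphi_i$, note that each conjugate of $f_i$ lies in $C_{\length{f_i}}(i) \subseteq L_{\length{f_i}} = C_{\length{f_i}}(j)$, so each leaf of $\varphi_i$, together with the edge reaching it from level $\length{f_i}-1$, already occurs in $\varphi_j$. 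Taking the common part up to level $\length{f_i}-1$ together with these $\length{f_i}$ edges exhibits $\varphi_i$ as a connected, root-containing subgraph of $\varphi_j$, i.e.\ a subtree. Consistency with Theorem~\ref{fibo-branching} is automatic: the single word of $L_{\length{f_i}}$ that is not a conjugate of $f_i$ is precisely one of the two children of the unique branching node of $\varphi_j$ on level $\length{f_i}-1$, and dropping it is exactly what turns that node from branching (in $\varphi_j$) into non-branching (in $\varphi_i$).

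I expect the main obstacle to be the mismatch at the leaf level: the upper portions of the two trees coincide, but $\varphi_i$ terminates one level before $\varphi_j$ begins its final branchings, so one must verify that the $\length{f_i}$ conjugates of $f_i$ really are among the level-$\length{f_i}$ factors of $\cword{(f_j)}$ and that precisely one extra factor is left over. The factor-complexity count of the second paragraph is what settles this cleanly, via the equality $C_k = L_k$ forced by matching cardinalities. The only degenerate situation is $\length{f_i}=1$ (that is $i \in \{1,2\}$), where $\varphi_i$ is a single edge and the claim is checked directly by inspection.
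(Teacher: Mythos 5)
Your proof is correct and follows essentially the route the paper intends: the corollary is presented as an immediate consequence of the proof of Theorem~\ref{fibo-branching}, and your argument simply spells that consequence out --- the factor-complexity bound $k+1$ for the infinite Fibonacci word, combined with the one-branching-node-per-level count, forces the level-$k$ factor sets of $\cword{(f_i)}$ and $\cword{(f_j)}$ to coincide for all $k \le \length{f_i}-1$ and places the conjugates of $f_i$ inside the level-$\length{f_i}$ factors of $\cword{(f_j)}$, which is exactly the subtree embedding. One caveat on the degenerate case you defer to inspection: under the paper's convention $f_1 = b$, $f_2 = a$, the pair $(i,j)=(1,2)$ actually falsifies the statement as written ($\varphi_1$ is a single edge labelled $b$, $\varphi_2$ a single edge labelled $a$), so ``checked by inspection'' succeeds only for $j \ge 3$; this is a defect of the corollary's statement rather than of your argument, but it should be flagged rather than asserted to hold.
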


Thus the trees of Fibonacci words are not only very similar, but they
contain recurring subtrees.
Notice in Figure \ref{treef6f7}, that the tree of $\cword{(f_5)}$ appears in the tree of $\cword{(f_6)}$ which also appears in the tree of $\cword{(f_7)}$,
marked by the dashed lines.
Thus we can define the tree $\varphi$ which belongs to the limit
of the sequence of Fibonacci words, that is, the infinite Fibonacci
word. Each path in the tree $\varphi$ defines an infinite suffix
of the infinite Fibonacci word. This is a consequence of the structure of the
trees $\varphi_{i}$ ($i = 1, 2, \ldots$), since all of their words are
factors of the infinite Fibonacci word and an infinite factor must be
a suffix.

Let us state another interesting fact about branching nodes of trees of
circular Fibonacci words.
\begin{theorem}
 Consider the tree $\varphi_{i}$ for any $i \in \nonnegint$. Let $u$ and
 $u'$ be branching nodes of $\varphi_{i}$ such that they lie on the same
 path and there are no other branching nodes between them. Then
 $|\ell(u) - \ell(u')|$ is a Fibonacci number.
\end{theorem}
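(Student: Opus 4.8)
The plan is to recast the statement in terms of the right-special factors of the Fibonacci word and then to identify the gap between two consecutive branching nodes with the least period of such a factor.

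First I would fix a dictionary between nodes and factors. A node on level $\ell$ of $\varphi_i$ is reached from $\TR$ by reading a length-$\ell$ factor of $\cword{(f_i)}$, and by the longest-common-prefix criterion used just before Proposition \ref{branchingnodes} the node is branching precisely when that factor can be continued inside $\cword{(f_i)}$ by both letters, i.e.\ when it is \emph{right-special}. The counting in the proof of Theorem \ref{fibo-branching} shows that $\cword{(f_i)}$ has exactly $k+1$ factors of every length $k \leq \length{f_i}-1$, the same complexity as the infinite Fibonacci word, so for each level $\ell \le \length{f_i}-2$ there is a unique right-special factor $r_\ell$, and the unique branching node on level $\ell$ is the one read off by $r_\ell$. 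A root-to-leaf path reads a conjugate $v \in \cword{(f_i)}$, and it passes through the branching node on level $\ell$ exactly when $r_\ell$ is the length-$\ell$ prefix of $v$.

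The one structural fact I need is that the right-special factors are nested as suffixes: deleting the first letter of $r_{\ell+1}$ yields a length-$\ell$ factor that is still right-special, hence equals $r_\ell$ by uniqueness, so $r_m$ is the length-$m$ suffix of $r_{\ell'}$ for all $m \le \ell'$. Now take branching nodes $u,u'$ on one path, say $\ell(u)=\ell<\ell'=\ell(u')$, reading $v$. Both $r_\ell$ and $r_{\ell'}$ are prefixes of $v$, so $r_\ell$ is a prefix of $r_{\ell'}$, and by suffix-nesting it is also a suffix, i.e.\ a border of $r_{\ell'}$; conversely every border of $r_{\ell'}$ of length $m$ is the suffix $r_m$, which is then a prefix of $v$, so $m$ is itself a branching level on the path. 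Hence the branching levels on the path that are at most $\ell'$ are exactly the border-lengths of $r_{\ell'}$, and ``$u,u'$ consecutive'' means that $\ell$ is the longest proper border-length of $r_{\ell'}$. Therefore $\ell(u')-\ell(u)$ equals the \emph{least period} of $r_{\ell'}$, and the theorem reduces to the claim that the least period of every right-special factor of the Fibonacci word is a Fibonacci number, i.e.\ of the form $\length{f_k}$.

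For this last claim, when $r_{\ell'}$ is at least twice as long as its least period $q$, its length-$2q$ prefix is a square $u^2$ with $\length{u}=q$; since $r_{\ell'}$, and hence $u^2$, is a factor of the infinite Fibonacci word, Lemma \ref{seebold} forces $u$ to be a conjugate of some finite Fibonacci word, so $q=\length{u}=\length{f_k}$ is a Fibonacci number. The main obstacle is the complementary short-border regime $\length{r_{\ell'}}<2q$, where the least period exceeds half the length and no square of length $2q$ occurs as a prefix, so Lemma \ref{seebold} cannot be applied to $r_{\ell'}$ itself. Here I would argue by induction on $\ell'$, using that $r_{\ell'}$ is a factor of some finite Fibonacci word and exploiting the recursion $f_k=f_{k-1}f_{k-2}$ together with Lemma \ref{LyndonSchutzenberger} applied to the border $r_\ell$ of $r_{\ell'}$, so as to locate $r_{\ell'}$ within one Fibonacci block and read off its least period as a Fibonacci number. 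This reproduces the classical fact that the least periods of the prefixes of the Fibonacci word are exactly the values $\length{f_k}$; closing this short-border case cleanly is the step I expect to demand the most care. Once it is settled, every gap $\ell(u')-\ell(u)$ between consecutive branching nodes on a path is a Fibonacci number, which is the assertion.
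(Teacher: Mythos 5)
Your reduction is sound and in fact more careful than the paper's own argument. The dictionary ``branching node on level $\ell$ $=$ unique right-special factor $r_\ell$'', the suffix-nesting of the $r_\ell$, and the conclusion that the branching levels met by a fixed root-to-leaf path below $\ell'$ are exactly the border lengths of $r_{\ell'}$ --- so that the gap between consecutive branching nodes is the least period of $r_{\ell'}$ --- all check out, and your endgame is the same as the paper's: exhibit a square of that period inside the infinite Fibonacci word and invoke Lemma \ref{seebold} to force the period to be the length of a conjugate of some $f_k$. (The published proof phrases this as ``$vv$ is a factor of some $\cword{(f_j)}$'' for the label word $v$ between $u$ and $u'$, and offers no more justification than that.)

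The problem is the case you flagged yourself and did not close: when $\length{r_{\ell'}} < 2q$ for the least period $q$, no square of period $q$ fits inside $r_{\ell'}$, so Lemma \ref{seebold} does not apply directly --- and this case genuinely occurs. For instance, in $\varphi_7$ the right-special factor of length $7$ is $babaaba$ (the reversal of the prefix $abaabab$); its longest proper border is $ba$, so $q=5>7/2$, and the corresponding consecutive branching nodes sit on levels $2$ and $7$. Your sentence about ``induction on $\ell'$, locating $r_{\ell'}$ within one Fibonacci block'' is a plan, not a proof. What you need is precisely the classical fact that the least period of every prefix (equivalently, of every reversed prefix, i.e.\ every right-special factor) of the infinite Fibonacci word is a Fibonacci number; this can be derived from the fact that $f_{n-1}f_{n-2}$ and $f_{n-2}f_{n-1}$ differ only in their last two letters, or by showing that in the Fibonacci word every factor of least period $q$ extends on the right to a factor of length $2q$ with the same period, which restores the square and lets Lemma \ref{seebold} finish. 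It is worth noting that the paper's own proof silently assumes exactly this extension step when it asserts that $vv$ occurs in some larger circular Fibonacci word, so you have isolated the thinnest point of the published argument --- but as written your proof leaves it open.
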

\begin{proof}
 Assume the contrary, that is, there is a Fibonacci word $f_{i}$ such that
 there are two branching nodes $u$, $u'$ in tree $\varphi_{i}$ that lie on the
 same path and do not have any other branching nodes between them, but
 $|\ell(u) - \ell(u')|$ is not a Fibonacci number. Then, there exists a
 Fibonacci word $f_{j}$ with $j \geq i$ such that $\cword{(f_{j})}$ has
 square factor $vv$ where $v$ is the word constructed from the labels on
 the path between $u$ and $u'$. Moreover, this will be true for all
 Fibonacci words $f_{j'}$ where $j' \geq j$. Thus the infinite Fibonacci
 word must contain the square factor $vv$. This contradicts Lemma \ref{seebold},
 since $v$ cannot be a conjugate of any Fibonacci word because its length
 is not a Fibonacci number.
 Thus our indirect assumption is false.
\end{proof}

\section{Conclusion and future directions}\label{concs}

Combinatorics on circular words is a field that still has countless open
problems and many possible research directions. We have shown some
non-traditional methods of considering (representing) circular words.
The following questions are still open and may lead to a better
characterization of these sequences.

\begin{enumerate}
 \item The algorithm presented in Section \ref{tuples_reps} does not always provide
 optimal solutions. Is there a way of deciding how to choose the best sequence of roots in the algorithm?
 \item Theorem \ref{twoletteriter} is about the minimal roots of words over the two letter
 alphabet. What can we say about words over alphabets of more than two letters?
 \item One could use the tree $\varphi$ to deduce some properties of the
 infinite Fibonacci word.
 \item Or the tree representations can be utilized to prove results about
 the structure of other (possibly infinite) words.
 \item We believe, that Theorem \ref{fibo-branching} is true for all
 standard sturmian words (see e.g., \cite{deLucaMignosi94} for their
 definition).
\end{enumerate}

\section*{Acknowledgements}
The authors would like to thank the reviewers for their valuable and
useful comments.
The work is supported by the T\'{A}MOP 4.2.2/C-11/1/KONV-2012-0001
and 4.2.2/B-10/1-2010-0024 projects. The projects are implemented through the New
Hungary Development Plan, co-financed by the European Social Fund
and the European Regional Development Fund.

\bibliographystyle{eptcs}
\bibliography{representations}

\end{document}